\newtheorem{theorem}{Theorem}
\theoremstyle{definition}
\def \beq{ \begin{equation} }
\def \eeq{\end{equation}}
\def \pd{\partial}
\def\.#1{\dot #1}
\def \( {\big( }
\def \) {\big) }
\def \bar{\overline}
\def \1{{\bf I}}
\def \2{{\bf II}}
\def\.#1{\dot #1}
\begin{document}
\DeclareGraphicsRule{.eps.gz}{eps}{.eps.bb}{#1}
\DeclareGraphicsRule{.ps.gz}{eps}{.ps.bb}{#1}
\title{\bf{A Counterexample to a Generalized Saari's Conjecture with a Continuum of Central Configurations}}

\author{{Manuele Santoprete}\renewcommand{\thefootnote}{\alph{footnote})}\footnotemark[1]\\Department of Mathematics\\294 Multipurpose Science \& Technology Bldg.\\
University of California, Irvine \\Irvine, California, 92697-3875 USA}

\renewcommand{\thefootnote}{\alph{footnote})}
\maketitle
\footnotetext[1]{Electronic mail: msantopr@math.uci.edu}

\begin{abstract}
\noindent In this paper we show that in the $n$-body problem with harmonic potential one can find a continuum of central configurations for $n=3$.
Moreover we show a counterexample to an  interpretation of Jerry Marsden  Generalized Saari's conjecture. This will help to refine our understanding and formulation of the Generalized Saari's conjecture, and in turn it might provide insight in how to solve the classical Saari's conjecture for $n\geq 4$.
\end{abstract}

\noindent {\bf Keywords:} celestial mechanics, n-body problem, central configurations, Saari's conjecture.

\markboth{Santoprete M.}{A Counterexample to a Generalized Saari's Conjecture}

\section{Introduction}
Steve Smale proposed a set of problems for the 21st century. Smale's 6th problem  concerns the existence of a continuum of central configurations  in the $n$-body problem. In a nice paper Gareth Roberts found that there exists a continuum of central configurations in the $5$-body problem, where he allows some of the masses to have a negative value. Felipe Alfaro and Ernesto P\'erez-Chavela improved upon this result by finding a continuum of central configurations in the charged $4$-body problem. However the problem Steve Smale proposed is still open.

On the other hand, in 1970 Donald Saari made a beautiful conjecture:
{\it Every solution of the Newtonian $n$-body problem that has a constant moment of inertia is a relative equilibrium} \cite{Saari}.

Saari's conjecture has, in the last year, generated a good deal of interest.
In particular several partial result have been announced (see \cite{Diacu,Roberts2} for more details). Chris McCord provided a proof for the three body problem with equal masses \cite{McCord}. Jaume Llibre and Eduardo Pi\~na  found a different proof and an algorithm that could be used in the full three body problem \cite{Llibre}. Rick Moeckel has devised a computer-assisted proof for the full three body problem \cite{Moeckel}. Florin Diacu, Ernesto P\'erez-Chavela and Manuele Santoprete devised a proof for the collinear $n$-body problem \cite{Diacu}.  Nevertheless the conjecture remains open for $n\geq 4$.

Jerry Marsden  informally proposed a generalized Saari's conjecture at the Midwest Dynamical System Conference held at the University of Cincinnati, 4-7 October 2002. His conjecture concerns   mechanical systems with symmetry and gives new insight into the problem. The  original version of this conjecture states: {\it For a mechanical system with symmetry on the configuration manifold, the locked inertia tensor $I({\bf q})$ is constant along a solution if and only if ${\bf q}$ is a relative equilibrium}. The purpose  of  Jerry Marsden  was to  invite to find classes of dynamical systems with symmetry that verify the properties above rather than claim that all simple dynamical systems with symmetry satisfy the summentioned conditions.
This intent is more explicitly expressed in the ``Refined Saari problem" that improves and clarifies the statement above and was recently proposed by Antonio Hernand\'ez-Gardu\~no, Jeff Lawson and Jerry Marsden\footnote{The statement of the Refined Saari Problem appeared in \cite{Lawson} after this article was submitted for publication. All the remarks contined in this work apply equally well to the generalized Saari's conjecture and to the Refined Saari Problem.} (see \cite{Lawson}).

In this paper we wish to study an $n$-body problem where the particles interact by means of a harmonic potential.
We will show that a continuum of central configurations can be found in the $3$-body
problem with a special harmonic potential.
This result is quite interesting and complement  the ones found in the literature \cite{Roberts1,Perez}.
First of all, the examples in \cite{Roberts1,Perez} consider
interactions that are both attractive and repulsive, so that, in certain directions, the effect of different bodies cancels out.
In this paper we provide an example where there is no need to introduce repulsive forces and to cancel out the effect of some of the bodies.

Moreover, the example that we present  requires only $3$ bodies. This  the minimum number of bodies for which one can have a continuum of central configurations. Indeed for two bodies all the central configurations are equivalent, and therefore there can be only one central configuration.

In this paper we also provide a counterexample to an interpretation of the  Generalized Saari's conjecture above.
Some other examples concerning the $n$-body can be found in \cite{Chenciner, Chenciner1,Roberts2}. In \cite{Chenciner,Roberts2} the authors find counterexamples in the case of Hamiltonian systems with a inverse square law potential (the Jacobi, or ``pure Manev" potential). In \cite{Chenciner1, Roberts2} the autors present counterexamples for a class of homogeneous potentials with ``masses" of opposite sign.

The counterexample we present in this article is appealing for several reasons.
Firstly we do not need to introduce negative ``masses" and thus repulsive terms in the potential. Using negative masses changes the ellipsoids of constant inertia into  hyperboloids of constant inertia. This modifies the problem, since, in the simplest interpretation, the locked inertia tensor $I$ should be the moment of inertia i.e., a positive definite quadratic form.
Secondly our example is very simple and does not require a complicated analysis. Moreover it shows that  pathological behavior can be found even in the simplest of the $n$-body problems: the one with harmonic potential.
Furthermore it shows that the Jacobi potential is not an isolated case, but potentials with different power law provide a counterexample to the conjecture.

We would like to remark that one of the main benefits of finding counterexamples to the Generalized Saari's conjecture is to refine our  formulation and comprehension of  it, and this, in turn, will help us understand the classical Saari's conjecture and  provide insight in why and how the classical conjecture might fail.

Marsden's generalization has the merit of stimulating the development of new tools, such as techniques of geometrical mechanics, to study this problem. However, by now, it is clear that the generalized Saari's conjecture under discussion in this paper, as well as the Refined Saari Problem, do not include all simple mechanical systems with symmetry, but only some classes, as for example, as it is shown in \cite{Lawson}, system on Lie groups.

Let ${\bf q}=({\bf q}_1,{\bf q}_2...{\bf q}_n)\in {\mathbb R}^{2n}$ represent the positions of the $n$ bodies on the plane.
The Hamiltonian that describes the problem is of the form
\beq
H=\frac 1 2  \sum_1^n m_i\|\dot{\bf q}_i\|^2 +U({\bf q})
\eeq
and the equations of motions are
\beq
m_i\ddot{\bf q}_i=-\frac{\pd U}{\pd {\bf q}_i}
\eeq
A important quantity for this work is the moment of inertia
\beq
I({\bf q})=\sum_{i=1}^n m_i\|{\bf q}_i\|^2
\eeq
that can be written in terms of the mutual distances $r_{ij}=\|{\bf q}_i-{\bf q}_j\|$
as
\beq
I({\bf q})=\frac{1}{M}\sum_{i<j} m_im_jr_{ij}^2
\eeq
where $M=(m_1+m_2+...m_n)$.
A {\it central configuration} is a configuration $x\in {\mathbb R}^{2n}$ which satisfies the algebraic equation
\beq
\nabla I=\omega^2 \nabla U
\eeq
for some $\omega^2$. Therefore the central configurations are critical points of the potential energy $U$ restricted to the ellipsoids $I=k$.
Note that when counting central configurations it is standard to fix the size and identify configurations that are rotationally equivalent.

A given solution ${\bf q}={\bf q}(t)$ of the problem of $n$ bodies is called
{\it relative equilibrium} if there exists an orthogonal 2-matrix $\Omega=\Omega(t)$ such that for every $i$ and $t$ one has
\beq
{\bf q}_i= \Omega(t){\bf q}_i^0
\eeq
where ${\bf q}_i$, $\Omega$ belong to an arbitrary $t$ and ${\bf q}_i^0$ denotes ${\bf q}_i$ at some initial instant $t=t_0$
In such cases the system rotates about the
center of mass as a rigid body, the angular velocity is constant and the mutual
distances do not changes when $t$ varies.

In this paper we  consider a potential energy such that
\beq
U({\bf q})=\frac M 2 I({\bf q})
\label{pot}
\eeq
i.e. a particular case of coupled harmonic oscillators.
This potential is very peculiar because
\beq
\nabla U({\bf q})=\frac M 2\nabla I({\bf q})
\eeq
for every ${\bf q}$. This means that every point in configuration space is a central configuration. However in the case of two bodies all the configurations
are equivalent, and thus for an example of continuum of central configurations we need at least three bodies. An explicit example is given in the next section.


\begin{figure}[t!]
\begin{center}
\resizebox{!}{6.5cm}{\includegraphics{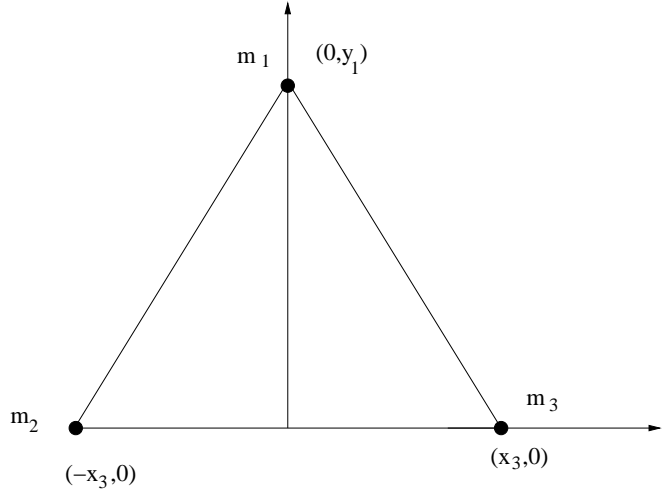}}
\end{center}
\caption{A symmetric configuration of three bodies.}
\label{triangle}
\end{figure}
\section{A Continuum of Central Configurations}
Let $P_1,P_2,P_3$ be three bodies of masses $m_1=m_2=m_3=1$, and let
${\bf q}_1=(0,y_1)$, ${\bf q}_2=(x_2,0)$ and ${\bf q}_3=(x_3,0)$ be the positions of the bodies where we take $x_2=-x_3$ (see Figure \ref{triangle}).
The moment of inertia, which can be written as
\beq
I=\frac 1 3(q_{12}^2+q^2_{13}+q^2_{23})=\frac 2 3 y_1^2+2x_3^2,
\eeq
defines an ellipse in the plane $(y_1,x_3)$.
Therefore the potential function restricted to the ellipsoid $I=k$ has a curve of critical points at
\beq
(y_1,x_3)=(\sqrt{\frac{3I}{2}}\cos\eta,\sqrt{\frac{I}{2}}\sin\eta)\quad \mbox{for }\quad 0\leq \eta\leq 2\pi.
\eeq
We have therefore proved the following

\begin{theorem}
In the three-body problem with harmonic potential given by (\ref{pot}), there exists a one-parameter family of degenerate central configurations where the three equal masses are positioned at the vertices of a isosceles triangle.
\end{theorem}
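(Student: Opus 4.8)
The plan is to leverage the single structural fact that makes this potential special: because $U({\bf q}) = \frac{M}{2} I({\bf q})$ by (\ref{pot}), one has $\nabla U = \frac{M}{2}\nabla I$ at every point, so the defining relation $\nabla I = \omega^2 \nabla U$ for a central configuration holds identically with $\omega^2 = 2/M$. Every configuration is therefore already a central configuration, and the entire content of the theorem is to display, within a convenient symmetric sub-family, a genuine one-parameter family of non-equivalent such configurations and to verify that they are degenerate.

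First I would restrict to the symmetric three-body configurations with the apex mass at ${\bf q}_1 = (0, y_1)$ and the two base masses placed symmetrically at ${\bf q}_2 = (-x_3, 0)$ and ${\bf q}_3 = (x_3, 0)$. Every such placement is an isosceles triangle, parametrized by the two numbers $(y_1, x_3)$. Substituting the three mutual distances into the center-of-mass expression $I = \frac{1}{3}(r_{12}^2 + r_{13}^2 + r_{23}^2)$ yields, after a short computation, $I = \frac{2}{3} y_1^2 + 2 x_3^2$, so each level set $I = k$ is an ellipse in the $(y_1, x_3)$-plane, which I would parametrize by $\eta \in [0, 2\pi]$ as in the excerpt.

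Since every point of configuration space is a central configuration, fixing the size $I = k$ and letting $\eta$ run produces a one-parameter family of central configurations on this ellipse. The one substantive point to check is that it is genuinely a continuum after the standard quotient by size and rotation: as $\eta$ varies, the ratio of the base $2|x_3|$ to the height $|y_1|$ equals $\tfrac{2}{\sqrt{3}}|\tan\eta|$ and hence sweeps out a full range of values, so the proportions (equivalently the apex angle) of the isosceles triangle change. The only identifications available are the discrete reflections and the relabeling of the two base masses, which cannot collapse a continuously varying family of shapes to a point; thus a continuum survives.

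Finally, degeneracy is immediate from the same structural fact: a central configuration is a critical point of $U$ restricted to the ellipsoid $I = k$, and $U = \frac{M}{2} I$ is literally constant on that ellipsoid, so the restricted function has vanishing differential and Hessian in every tangent direction. Every member of the family is therefore a maximally degenerate critical point. I expect no serious obstacle: the usual difficulty underlying Saari-type problems evaporates precisely because the potential is proportional to the moment of inertia, and the only step deserving a moment's care is confirming that moving $\eta$ alters the triangle's shape rather than merely rotating it.
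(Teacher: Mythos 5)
Your proof is correct and follows essentially the same route as the paper: observe that $U=\frac{M}{2}I$ makes $\nabla I=\omega^2\nabla U$ hold identically, restrict to the symmetric isosceles family, compute $I=\frac{2}{3}y_1^2+2x_3^2$, and parametrize the ellipse $I=k$ by $\eta$. Your additional checks --- that the shape (apex angle) genuinely varies with $\eta$ so the family survives the quotient by rotations, and that degeneracy follows because $U$ is constant on $I=k$ --- are points the paper leaves implicit, and they strengthen rather than alter the argument.
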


We wish to remark that for $\eta=0,\pi,2\pi$ the bodies $P_2$ and $P_3$ occupy the same position.

\section{A Counterexample to the Generalized Saari's Conjecture}
We begin our description of the counterexample to the generalized Saari's conjecture
considering four bodies $P_1,P_2,P_3,P_4$ at the vertices of a rhombus (see Figure \ref{rhombus}).
We let $m_1=m_2=m_3=m_4=1$ be the four masses and ${\bf q}_1=(0,y_1)$, ${\bf q}_2=(x_2,0)$, ${\bf q}_3=(x_3,0)$, ${\bf q}_4=(0,y_4)$, be the positions of the bodies, where $x_2=-x_3$ and $y_4=-y_1$.
\begin{figure}[t!]
\begin{center}
\resizebox{!}{6.5cm}{\includegraphics{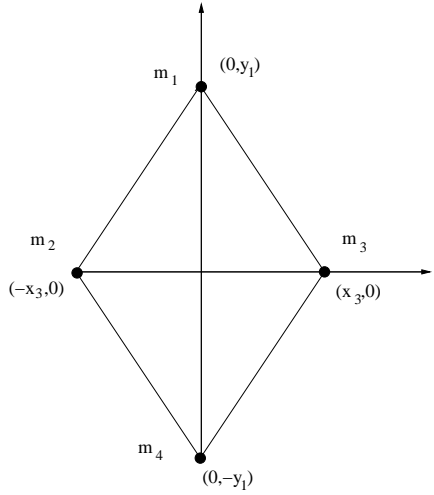}}
\end{center}
\caption{A symmetric configuration of four bodies.}
\label{rhombus}
\end{figure}
We want to consider only those solutions the configuration of which is a rhombus at all times.
The potential energy in this case is
\beq
U=\frac 1 2(r_{12}^2+r_{13}^2+r_{14}^2+r_{23}^2+r_{24}^2+r_{34}^2).
\eeq
 We can also rewrite the potential energy in terms of the coordinates of the bodies
\beq
U=\frac 1 2(2x_2^2+2x_3^2+2y_1^2+2y_4^2+(y_1-y_4)^2+(x_3-x_2)^2).
\eeq
The equations of motion for the bodies $P_1$ and $P_3$ are
\beq
\begin{split}
\ddot y_1=&-\frac{\pd U}{\pd y_1}=-2y_1-(y_1-y_4)\\
\ddot x_3=&-\frac{\pd U}{\pd x_3}=-2x_3-(x_3-x_2).
\end{split}
\label{eqbodies}
\eeq
and the equation of the other two bodies can be trivially deduced from them.
Since  $x_2=-x_3$ and $y_4=-y_1$, and the moment of inertia is such that $U=(M/2)I$, it is easy to see that
\beq
I=2(x_3^2+y_1^2)=U/2.
\eeq
Thus, if we let $I=k$, then the potential energy is also constant, and the curves $I=k$
are circles in the plane $(y_1,x_3)$.
Because of the high symmetry of the spatial configuration  of the bodies (i.e. $x_2=-x_3$ and $y_4=-y_1$)  the equation of motion (\ref{eqbodies}) for  $P_1$ and $P_3$ can be written as
\beq
\begin{split}
\ddot y_1=&-\frac{\pd U}{\pd y_1}=-4y_1\\
\ddot x_3=&-\frac{\pd U}{\pd x_3}=-4x_3.
\end{split}
\eeq
A particular solution of the above equations, with initial conditions $\bar y_1(0)=(\sqrt{I/2},0)$ and $\bar x_3(0)=(0,\sqrt{I/2})$, is given by the following equations

\beq
\begin{split}
\bar y_1=&\sqrt{\frac{I}{2}}\cos(2t)\\
\bar x_3=&\sqrt{\frac{I}{2}}\sin(2t).
\label{oscillators}
\end{split}
\eeq
Clearly the moment on inertia along the solution above is
\beq
2(\bar y_1^2+\bar x_3^2)=I
\eeq
that is constant by hypothesis. The solution of the differential equations (\ref{oscillators}) can be viewed as parametric equations of the circle.
Furthermore the solution (\ref{oscillators}) is  not a relative equilibrium according to the definition given at the beginning of the paper. More precisely we have the following
\begin{theorem}
The trajectory defined by
\[(\bar y_1,\bar x_3)=\left(\sqrt{\frac{I}{2}}\cos(2t),\sqrt{\frac{{I}}{2}}\sin(2t)\right)
\]
is a solution of the four-body problem with harmonic potential given by equation (\ref{pot}). This solution has constant moment of inertia $I=k$ and is not a relative equilibrium.
\end{theorem}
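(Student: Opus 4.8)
The plan is to establish the three assertions of the statement separately: that the given curve solves the equations of motion, that it has constant moment of inertia, and that it is not a relative equilibrium. The first two are short verifications, while the content of the theorem lies in the last.

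First I would confirm that the trajectory is a genuine four-body motion. The computation preceding the statement reduced the dynamics on the rhombus family to the two uncoupled oscillators $\ddot{\bar y}_1=-4\bar y_1$ and $\ddot{\bar x}_3=-4\bar x_3$, so differentiating $\bar y_1=\sqrt{I/2}\cos(2t)$ and $\bar x_3=\sqrt{I/2}\sin(2t)$ twice verifies both equations at once. To be sure this genuinely describes four moving bodies I would check that the constraints $x_2=-x_3$ and $y_4=-y_1$ persist under the full flow: the symmetric combinations $x_2+x_3$ and $y_1+y_4$ obey linear second-order equations with vanishing initial position and velocity, hence remain zero for all $t$ by uniqueness, and the positions of $P_2$ and $P_4$ are then recovered from those of $P_3$ and $P_1$. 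I would also note that $U$ is a smooth polynomial in the coordinates, so the coincidence $P_2=P_3$ occurring at $t=0$ is not a singularity and the solution is defined for all time.

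Second, the constancy of the moment of inertia is immediate: substituting the curve into $I=2(\bar y_1^2+\bar x_3^2)$ and using $\cos^2(2t)+\sin^2(2t)=1$ gives $I=k$ identically.

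The decisive step is to show that the solution is not a relative equilibrium. By the definition recalled in the introduction, along a relative equilibrium one has ${\bf q}_i=\Omega(t){\bf q}_i^0$ with $\Omega(t)$ orthogonal, so every mutual distance $r_{ij}$ stays constant in time. I would therefore exhibit a distance that varies: since $r_{23}^2=(x_3-x_2)^2=4x_3^2=2I\sin^2(2t)$ (and likewise $r_{14}^2=2I\cos^2(2t)$), the configuration genuinely changes shape rather than merely rotating, so no orthogonal $\Omega(t)$ can exist and the trajectory is not a relative equilibrium. Having constant $I$ yet failing to be a relative equilibrium, it is the desired counterexample. I expect no serious obstacle in any of this; the only point that truly requires care is the invariance of the rhombus subspace under the flow, since it is precisely this that legitimizes the reduction to two independent harmonic oscillators.
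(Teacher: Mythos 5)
Your proposal is correct, and on the decisive point it takes a genuinely different route from the paper. The paper proves non-relative-equilibrium by a direct matrix computation on the single body $P_1$: since ${\bf q}_1^0=(0,\sqrt{I/2})$ and ${\bf q}_1(t)=(0,\sqrt{I/2}\cos(2t))$, any matrix $\Omega(t)$ with $\Omega(t){\bf q}_1^0={\bf q}_1(t)$ must have second column $(0,\cos(2t))^t$, and imposing $\Omega\Omega^t=Id$ forces $A(t)^2=1$ and $B(t)^2=\sin^2(2t)$ while simultaneously requiring $A(t)B(t)=0$, which fails at $t=\pi/4$. You instead invoke the rigidity consequence of the definition --- an orthogonal $\Omega(t)$ preserves all mutual distances, as the paper itself notes in the sentence following the definition --- and exhibit $r_{23}^2=4\bar x_3^2=2I\sin^2(2t)$ as a distance that visibly varies. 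Your argument is arguably cleaner and more robust: it does not depend on the choice of a particular body or on the special form of $\Omega$, and it makes the geometric content (the configuration changes shape, it does not merely rotate) explicit. The paper's argument has the virtue of working directly and literally from the stated definition without passing through the intermediate ``constant mutual distances'' fact. Your additional check that the rhombus subspace $x_2=-x_3$, $y_4=-y_1$ is invariant under the full flow is a point the paper leaves implicit; including it is a genuine strengthening of the write-up rather than a deviation. The verifications of the equations of motion and of $I=k$ coincide with the paper's.
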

\begin{proof}
To prove the theorem we only have to show that the solution found above is not a relative equilibrium.
We can focus our attention on the body $P_1$. The orbit of the body $P_1$ along  the solution described above is ${\bf q}_1=(0,\sqrt{I/2}\cos(2t))$ and its position  at the instant $t=t_0=0$
is ${\bf q}^0_1=(0,\sqrt{I/2})$. Therefore we can write ${\bf q}_1=\Omega(t){\bf q}^0_1$ where
\beq \Omega(t)=\left (
\begin{array}{cc}
A(t)&0\\
B(t)&\cos(2t)
\end{array}\right ).
\eeq
However the matrix $\Omega(t)$ cannot be orthogonal for every $t$.  $\Omega(t)$ is orthogonal for any $t$ if and only if it  satisfies the equation $\Omega(t)\Omega^t(t)=Id$ for every $t$, where $Id$ is the two by two identity matrix. Explicitely we have
\beq\Omega(t)\Omega(t)^t=\left (
\begin{array}{cc}
A(t)^2& A(t)B(t)\\
A(t)B(t)& B(t)^2+\cos^2 (2t)
\end{array}\right)=
\left(
\begin{array}{cc}
1&0\\
0&1
\end{array}\right)
\eeq
from which we obtain that $A(t)^2=1$ and $B(t)^2=\sin^2(2t)$. But this implies that $A(t)B(t)\neq 0$  for some values of $t$. For example it is trivial to verify that $A(\pi/4)B(\pi/4)\neq 0$. This shows that $\Omega(t)$ is not orthogonal for  $t=\pi/4$ and thus the solution studied above is not a relative equilibrium. 
\end{proof}
Observe that for $t=k\pi$ the bodies $P_2$ and $P_3$ collide, while for  $t=\pi/2+k\pi$ the bodies $P_1$ and $P_2$ collide. However the equation of motions do not encounter singularities. Two different interpretations can be given to the equations, either one assume that the particles go through each other or that they have an elastic collision. We allow the particles to go trough each other. But, simply changing the notation, it is easy to introduce elastic collisions into the problem.

\section*{Acknowledgments}
The  author would like to thank Donald Saari and Ernesto P\'erez-Chavela for their advise and suggestions regarding this work.

\end{document}